\documentclass[a4paper,10pt]{amsart}
\setlength{\textwidth}{27pc} \setlength{\textheight}{43pc}
\usepackage{graphicx}
\usepackage{amssymb,amsmath,amstext,amscd}

\theoremstyle{plain}
\newtheorem{thm}{Theorem}[section]

\newtheorem*{main theorem}{Theorem}

\theoremstyle{definition}

\begin{document}

\title{The group of causal automorphisms }
\author{Do-Hyung Kim}
\address{Department of Applied Mathematics, College of Applied Science, Kyung Hee University,
Seocheon-dong, Giheung-gu,Yongin-si, Gyeonggi-do 446-701, Republic
of Korea} \email{mathph@khu.ac.kr}

\keywords{Lorentzian geometry, general relativity, causality,
Cauchy surface, space-time, global hyperbolicity, causal
automorphism, Minkowski space-time}

\begin{abstract}
The group of causal automorphisms on Minkowski space-time is given
and its structure is analyzed.
\end{abstract}

\maketitle

\section{Introduction} \label{section: 1}

In 1964, Zeeman has shown that any causal automorphism on
Minkowski space-time $\mathbb{R}^n_1$ can be represented by a
composite of orthochronous transformation, traslation and
homothety when $n \geq 3$.(See Ref. \cite{Zeeman}). Therefore, if
we let $G$ be the group of all causal automorphisms on
$\mathbb{R}^n_1$ with $n \geq 3$, then $G$ is isomorphic to the
semi-direct product of $\mathbb{R}^+ \times \mathcal{O}$ and
$\mathbb{R}^n_1$ where $\mathbb{R}^+$ is the group of positive
real numbers and $\mathcal{O}$ is the group of orthochronous
transformations on $\mathbb{R}^n_1$. However, as he remarked,
Zeeman's theorem does not hold when $n=2$.

In connection with this, recently, the standard form of causal
automorphism on $\mathbb{R}^2_1$ is given in Ref. \cite{CQG4}. In
this paper, we improve the result in Ref. \cite{CQG4} and by use
of this, the structure of the causal automorphism group is
analyzed when $n=2$.

\section{Causal automorphisms on $\mathbb{R}^2_1$} \label{section: 2}

The following is a known result.

\begin{thm} \label{standard}

Let $F : \mathbb{R}^2_1 \rightarrow \mathbb{R}^2_1$ be a causal
automorphism. Then, there exist a continuous functions $g :
\mathbb{R} \rightarrow \mathbb{R}$ and a homeomorphism $f :
\mathbb{R} \rightarrow \mathbb{R}$ which satisfy $\sup(g \pm
f)=\infty$, $\inf (g \pm f) = -\infty$ and $| \frac{g(t+\delta
t)-g(t)}{f(t+\delta t)-f(t)}| < 1$ for all $t$ and $\delta t$,
such that if $f$ is increasing,
then $F$ is given by \\
 $F(x,y) = ( \frac{f(x-y)+f(x+y)}{2} +
\frac{g(x+y)-g(x-y)}{2} , \frac{f(x+y)-f(x-y)}{2} +
\frac{g(x-y)+g(x+y)}{2} )$\\
and if $f$ is decreasing, then we have \\
  $F(x,y) =
(\frac{f(x+y)+f(x-y)}{2}+\frac{g(x-y)-g(x+y)}{2},\frac{f(x-y)-f(x+y)}{2}+\frac{g(x+y)+g(x-y)}{2})$.\\
Conversely, for any functions $f$ and $g$ which satisfy the above
conditions, the function $F : \mathbb{R}^2_1 \rightarrow
\mathbb{R}^2_1$ defined as above, is a causal automorphism.

\end{thm}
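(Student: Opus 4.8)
The plan is to linearize the problem by passing to null coordinates, in which the causal order on $\mathbb{R}^2_1$ is nothing but a product order, and then to classify the order-automorphisms of the resulting planar poset. First I would set $\xi = x - y$ and $\eta = x + y$; since the future of a point is the intersection of two null half-planes, after orienting each null axis toward the future the relation $p \le q$ becomes simply ``$\xi(p) \le \xi(q)$ and $\eta(p) \le \eta(q)$'', i.e. the coordinatewise partial order on $\mathbb{R}^2$. A causal automorphism is a bijection preserving $\le$ in both directions, so in null coordinates it is exactly an order-automorphism $\Phi$ of $(\mathbb{R}^2, \le)$ for the product order. The whole theorem is then a translation, back and forth through this linear change of variables, of a classification of such $\Phi$.

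The heart of the matter, and the step I expect to be the main obstacle, is to show that every order-automorphism of the planar product order is coordinatewise, possibly after interchanging the two factors. The key is an order-theoretic description of the null lines: for $p < q$ the order interval $[p,q] = \{r : p \le r \le q\}$ is a coordinate rectangle, and it is totally ordered precisely when $p$ and $q$ agree in one null coordinate, i.e. lie on a common null line. As this property is preserved by $\Phi$, null lines are carried to null lines. Two null lines of the same family are disjoint while two of opposite families meet in exactly one point, and $\Phi$ respects this incidence, so it either preserves the two null foliations or interchanges them. In the first case a line $\{\xi = c\}$ goes to a line $\{\xi = c'\}$, forcing the new $\xi$-coordinate to depend on the old $\xi$ alone, and likewise for $\eta$; hence $\Phi(\xi, \eta) = (\varphi(\xi), \psi(\eta))$ with $\varphi, \psi$ order-isomorphisms of $\mathbb{R}$. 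A monotone surjection of $\mathbb{R}$ is automatically a homeomorphism, so continuity is not an extra hypothesis but a consequence. The second (foliation-swapping) case gives $\Phi(\xi, \eta) = (\psi(\eta), \varphi(\xi))$ and is what produces the decreasing-$f$ alternative.

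It then remains to unwind the change of variables and to reconcile the bookkeeping with the functions $f$ and $g$. Setting $\varphi = f - g$ and $\psi = f + g$, a direct substitution of $\xi = x - y$, $\eta = x + y$ into $\Phi$ recovers precisely the displayed formula for $F$, with $f = \tfrac{1}{2}(\varphi + \psi)$ and $g = \tfrac{1}{2}(\psi - \varphi)$; the no-swap case makes $f$ increasing and the swap case makes it decreasing. The identities $g + f = \psi$ and $g - f = -\varphi$ turn the surjectivity of $\varphi, \psi$ into exactly the conditions $\sup(g \pm f) = \infty$ and $\inf(g \pm f) = -\infty$, while strict monotonicity of $\varphi = f - g$ and $\psi = f + g$ in the same sense as $f$ is exactly the statement that $\left| \frac{g(t + \delta t) - g(t)}{f(t + \delta t) - f(t)} \right| < 1$ for all $t$ and $\delta t$. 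In other words, the analytic hypotheses on $(f, g)$ are a verbatim restatement of ``$\varphi$ and $\psi$ are monotone homeomorphisms of $\mathbb{R}$''.

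The converse is then immediate from this dictionary: given $f$ and $g$ satisfying the stated conditions, $\varphi = f - g$ and $\psi = f + g$ are monotone homeomorphisms of $\mathbb{R}$, so $(\xi, \eta) \mapsto (\varphi(\xi), \psi(\eta))$ (or its swapped version) is an order-automorphism of the product order with order-automorphism inverse, and transporting it back through the change of variables shows that the explicitly defined $F$ is a bijection preserving the causal order in both directions, hence a causal automorphism. The only genuinely delicate point in the whole scheme is the order-theoretic recovery of the null foliations and the argument that their two-ruling structure is respected; once that rigidity is in hand, every remaining step is a routine verification.
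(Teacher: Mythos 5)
Your proposal is essentially correct, but it is doing something the paper does not attempt: the paper's entire ``proof'' of this theorem is the citation ``See Theorem 4.4 in Ref.\ \cite{CQG4}'', so the result is imported, not proved. What you have written is a self-contained argument for the imported result, and its skeleton is sound: pass to null coordinates so that the causal order becomes the product order on $\R^2$; recover the null lines order-theoretically (for $p<q$ the interval $[p,q]$ is a chain precisely when $p,q$ share a null coordinate, so maximal ``chain-interval'' sets are the null lines and are permuted by any order-automorphism); use the incidence pattern of the two rulings to conclude that the automorphism either preserves or swaps the foliations, hence is $(\xi,\eta)\mapsto(\varphi(\xi),\psi(\eta))$ or its swapped variant with $\varphi,\psi$ monotone surjections, hence homeomorphisms; then translate back to $(f,g)$, where the hypotheses $\sup(g\pm f)=\infty$, $\inf(g\pm f)=-\infty$ and $\bigl|\frac{g(t+\delta t)-g(t)}{f(t+\delta t)-f(t)}\bigr|<1$ are exactly surjectivity and same-sense strict monotonicity of $f\pm g$. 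This is in the spirit of the Alexandrov--Zeeman rigidity arguments and is arguably more transparent than the formula-driven statement it proves; what it buys is that Theorem~\ref{main1} of the paper (the $(\varphi,\psi)$ normal form) drops out immediately, since your $\Phi(\xi,\eta)=(\varphi(\xi),\psi(\eta))$ \emph{is} that normal form before the change of variables is undone.

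The one place you must be careful is the orientation bookkeeping, which you flag but do not carry out. With the paper's conventions the second coordinate is the time coordinate (one can check this against the theorem itself: $(x,y)\mapsto(-x,y)$ is realized by the decreasing formula with $f=-\mathrm{id}$, $g=0$, while $(x,y)\mapsto(x,-y)$ is realized by nothing), so of the two combinations $x\pm y$ only $x+y$ is future-oriented; with your literal choice $\xi=x-y$, $\eta=x+y$ the causal order is the product order with the $\xi$-factor \emph{reversed}. If you reorient that axis before classifying, everything you say goes through, and the foliation-preserving case gives increasing $f$ while the foliation-swapping case gives decreasing $f$ as you claim; if you forget to, the increasing/decreasing dichotomy and the two displayed formulas come out mismatched. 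This is a sign convention, not a gap in the idea, but it is exactly the kind of detail that would need to be pinned down to turn the sketch into a proof.
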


\begin{proof}
See Theorem 4.4 in Ref. \cite{CQG4}.
\end{proof}

We remark that in the above theorem, for given causal automorphism
$F(x,y)=(F_1(x,y),F_2(x,y))$, the homeomorphism $f$ and the
continuous function $g$ are uniquely determined by $f(t)=F_1(t,0)$
and $g(t)=F_2(t,0)$. By use of this, we improve the above theorem.

\begin{thm} \label{main1}
Let $F : \mathbb{R}^2_1 \rightarrow \mathbb{R}^2_1$ be a causal
automorphism on $\mathbb{R}^2_1$. Then, there exist unique
homeomorphisms $\varphi$ and $\psi$ of $\mathbb{R}$, which are
either both increasing or both decreasing, such that if $\varphi$
and $\psi$ are increasing, then we have $F(x,y) =
\frac{1}{2}(\varphi(x+y)+\psi(x-y), \varphi(x+y)-\psi(x-y))$, or
if $\varphi$ and $\psi$ are decreasing, then we have $F(x,y) =
\frac{1}{2}(\varphi(x-y)+\psi(x+y), \varphi(x-y)-\psi(x+y))$.

Conversely, for any given homeomorphisms $\varphi$ and $\psi$ of
$\mathbb{R}$, which are either both increasing or both decreasing,
the function $F$ defined as above is a causal automorphism of
$\mathbb{R}^2_1$.
\end{thm}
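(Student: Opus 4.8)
The plan is to reduce everything to Theorem \ref{standard} by the linear change of unknowns $\varphi = f+g$, $\psi = f-g$ (equivalently $f = \tfrac12(\varphi+\psi)$, $g = \tfrac12(\varphi-\psi)$). Writing $u = x+y$, $v = x-y$ and substituting into the standard form should collapse the four-term expressions into the claimed two-term form. For the increasing case one computes
\[
F_1 = \frac{f(u)+g(u)}{2} + \frac{f(v)-g(v)}{2} = \frac{\varphi(u)+\psi(v)}{2}, \qquad F_2 = \frac{\varphi(u)-\psi(v)}{2},
\]
giving $F(x,y) = \tfrac12(\varphi(x+y)+\psi(x-y),\ \varphi(x+y)-\psi(x-y))$; the decreasing case is entirely parallel with the roles of $u$ and $v$ interchanged, and leads to the second claimed formula with the \emph{same} $\varphi,\psi$.

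Next I would verify that $\varphi$ and $\psi$ are homeomorphisms with the stated monotonicity. Continuity is immediate since $f,g$ are continuous. For the monotonicity I would use the defining inequality $|g(t+\delta t)-g(t)| < |f(t+\delta t)-f(t)|$: writing $a = \varphi(t+\delta t)-\varphi(t)$ and $b = \psi(t+\delta t)-\psi(t)$, and $P=f(t+\delta t)-f(t)$, $Q=g(t+\delta t)-g(t)$, so that $a=P+Q$ and $b=P-Q$, the inequality $|Q|<|P|$ forces both $a$ and $b$ to share the sign of $P$. Hence $\varphi$ and $\psi$ are strictly monotone in the same direction as $f$. Surjectivity then follows from the hypotheses $\sup(g\pm f)=\infty$ and $\inf(g\pm f)=-\infty$, since $g+f=\varphi$ and $g-f=-\psi$ translate these directly into $\sup\varphi=\sup\psi=\infty$ and $\inf\varphi=\inf\psi=-\infty$. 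A continuous, strictly monotone, surjective map of $\mathbb{R}$ is a homeomorphism, which finishes existence.

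Uniqueness I would read off from the remark following Theorem \ref{standard}: since $f(t)=F_1(t,0)$ and $g(t)=F_2(t,0)$ are uniquely determined by $F$, so are $\varphi(t)=F_1(t,0)+F_2(t,0)$ and $\psi(t)=F_1(t,0)-F_2(t,0)$. For the converse I would run the substitution backwards: given homeomorphisms $\varphi,\psi$ of the prescribed type, set $f=\tfrac12(\varphi+\psi)$, $g=\tfrac12(\varphi-\psi)$, check that these satisfy the three hypotheses of Theorem \ref{standard} — the sup/inf conditions as above, and the ratio bound $|Q/P|=|a-b|/|a+b|<1$, valid because for $a,b$ of the same sign one has $|a-b|<|a+b|$ — and then invoke the converse direction of Theorem \ref{standard}.

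The only genuinely delicate step is the monotonicity equivalence: recognizing that the pointwise ratio condition $|Q/P|<1$ is precisely what forces $f\pm g$ to be co-monotone, and conversely that co-monotonicity of two strictly monotone homeomorphisms yields the strict ratio bound. Everything else is bookkeeping with the change of variables.
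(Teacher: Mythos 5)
Your proposal is correct and follows essentially the same route as the paper: pass to $\varphi=f+g$, $\psi=f-g$, translate the hypotheses of Theorem~\ref{standard} into monotonicity, surjectivity and the homeomorphism property of $\varphi,\psi$, and reverse the substitution for the converse. The only (harmless) variations are that you deduce the homeomorphism property from strict monotonicity plus surjectivity where the paper invokes invariance of domain, and that in the converse you should also record explicitly that $f=\tfrac12(\varphi+\psi)$ is itself a homeomorphism, as Theorem~\ref{standard} requires.
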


\begin{proof}

For any given causal automorphism $F : \mathbb{R}^2_1 \rightarrow
\mathbb{R}^2_1$, by the previous theorem, we can get unique
homeomorphism $f$ and unique continuous function $g$. Let
$\varphi=f+g$ and $\psi=f-g$. Then, clearly, $\varphi$ and $\psi$
are continuous.

The conditions $\sup(g \pm f)=\infty$, $\inf (g \pm f) = -\infty$
imply that $\varphi$ and $\psi$ are surjective. If
$\varphi(t)=\varphi(t^\prime)$ with $t \neq t^\prime$, then this
implies that $\frac{g(t^\prime)-g(t)}{f(t^\prime)-f(t)}=-1$, which
contradicts to the condition $| \frac{g(t+\delta
t)-g(t)}{f(t+\delta t)-f(t)}| < 1$ for all $t$ and $\delta t$.
Therefore, $\varphi$ is injective and likewise, we can show that
$\psi$ is injective. Since $\varphi$ and $\psi$ are continuous
bijections from $\mathbb{R}$ to $\mathbb{R}$, the topological
domain of invariance implies that $\varphi$ and $\psi$ are
homeomorphisms.

The equation $\varphi(t_2)-\varphi(t_1) =
(f(t_2)-f(t_1))[1+\frac{g(t_2)-g(t_1)}{f(t_2)-f(t_1)}]$ tells us
that, by the condition $| \frac{g(t+\delta t)-g(t)}{f(t+\delta
t)-f(t)}| < 1$, $\varphi$ is increasing if and only if $f$ is
increasing. Likewise, we can show that $\psi$ is increasing if and
only if $f$ is increasing.

By simple calculation, we can show that $F$ has the desired form
when expressed in terms of $\varphi$ and $\psi$. This completes
the proof of the first part.

To prove the converse, let $\varphi$ and $\psi$ be increasing
homeomorphisms on $\mathbb{R}$. Let $f = \frac{1}{2}(\varphi +
\psi)$ and $g = \frac{1}{2}(\varphi - \psi)$. Then $g$ is
continuous and $f$ is an increasing homeomorphism. Since $\varphi$
and $\psi$ are homeomorphisms, we have $\sup(f+g) = \sup \varphi =
\infty$, $\sup(f-g) = \sup \psi = \infty$, $\inf(f+g) = \inf
\varphi = -\infty$ and $\inf(f-g) = \inf \psi = -\infty$.

We now show that $f$ and $g$ satisfy the inequality $|
\frac{g(t+\delta t)-g(t)}{f(t+\delta t)-f(t)}| < 1$. Let $\Delta =
\frac{g(t)-g(t_0)}{f(t)-f(t_0)} =
\frac{\varphi(t)-\varphi(t_0)-(\psi(t)-\psi(t_0))}{\varphi(t)-\varphi(t_0)+\psi(t)-\psi(t_0)}$.
Without loss of generality, we can assume $t > t_0$ and it is easy
to see that $-1 < \Delta <1$, since $\varphi$ and $\psi$ are
increasing. Therefore, by the theorem ~\ref{standard}, the
function $F$ defined as in the statement, is a causal
automorphism. By the exactly same argument, we can show that the
assertion also holds when $\varphi$ and $\psi$ are both decreasing
homeomorphisms.
\end{proof}

\section{The group of causal automorphisms on $\mathbb{R}^2_1$ } \label{section: 3}

In this section, we denote the group of all causal automorphisms
on $\mathbb{R}^2_1$ by $G$, and we analyze its group structure.
For this we let $H(\mathbb{R})$ be the group of all homeomorphisms
on $\mathbb{R}$ and let $H=H^+ \cup H^-$ where
$H^+=\{(\varphi,\psi) \in H(\mathbb{R}) \times H(\mathbb{R}) \,\,
| \,\, \varphi, \psi \,\, \mbox{are increasing}\, \}$ and
$H^-=\{(\varphi,\psi) \in H(\mathbb{R}) \times H(\mathbb{R}) \,\,
| \,\, \varphi, \psi \,\, \mbox{are decreasing}\, \}$. Then $H$ is
a subgroup of $H(\mathbb{R}) \times H(\mathbb{R})$ under the
operation induced from $H(\mathbb{R}) \times H(\mathbb{R})$.

From the Theorem ~\ref{main1}, we can see that any causal
automorphism $F$ on $\mathbb{R}^2_1$ corresponds to a unique
element in $H$ and, conversely, each elements in $H$ uniquely
determines a causal automorphism on $\mathbb{R}^2_1$. Thus, there
exists a one-to-one correspondence between $G$ and $H$ as a set.
It might seem that $G$ is isomorphic to $H$. However, we cannot
obtain an isomorphism in this way and we define a new operation on $H$ as follows.

We define a $\mathbb{Z}_2$-action on $H$ by $a \cdot (\varphi,
\psi) = (\varphi, \psi)$ if $a=0$ and $a \cdot (\varphi, \psi) =
(\psi, \varphi)$ if $a=1$. If we define a map $\pi : H \rightarrow
\mathbb{Z}_2$ by $\pi(x) = 0$ if $x \in H^+$ and $\pi(x)=1$ if $x
\in H^-$, then $\pi$ is a group homomorphism with $H$ equipped
with the operation induced from $H(\mathbb{R}) \times
H(\mathbb{R})$. Note that $\pi ( \varphi, \psi) =
\pi(\varphi^{-1}, \psi^{-1}) = \pi(\psi, \varphi)$ and $\pi (
\pi(a,b) \cdot (\varphi, \psi)) = \pi(\varphi, \psi)$ for any
$(a,b) \in H$.

To get an isomorphism from $G$ to $H$, we define a new operation $*$
on $H$ by $(\alpha, \beta) * (\varphi, \psi) = (\alpha, \beta)
\circ \pi(\alpha, \beta) \cdot (\varphi, \psi)$ where $\cdot$ is
the $\mathbb{Z}_2$-action defined above and $\circ$ is the
operation induced from $H(\mathbb{R}) \times H(\mathbb{R})$.

\begin{thm}
The set $H$ under $*$ is a group and is isomorphic to $G$.
\end{thm}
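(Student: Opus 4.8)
The plan is to promote the set-theoretic bijection supplied by Theorem~\ref{main1} to a group isomorphism and then to argue by transport of structure. Let $\Phi : G \to H$ be the map sending a causal automorphism $F$ to the unique pair $(\varphi,\psi) \in H$ attached to it by Theorem~\ref{main1}; by that theorem $\Phi$ is a bijection whose inverse sends each admissible pair to the causal automorphism written in the statement. Everything reduces to the single identity
\[
\Phi(F_1 \circ F_2) = \Phi(F_1) * \Phi(F_2), \qquad \text{for all } F_1, F_2 \in G .
\]
Once this is established the theorem follows with no further work: $(G,\circ)$ is a group, $\Phi$ is a bijection onto $H$, and the identity shows that $\Phi$ intertwines $\circ$ with $*$. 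In particular the right-hand side always lies in $H$ (being $\Phi$ of an element of $G$), so $*$ is a genuine binary operation on $H$; and associativity, the identity $\Phi(\mathrm{id}) = (\mathrm{id},\mathrm{id})$, and inverses are all inherited from $G$. Thus $(H,*)$ is a group and $\Phi$ an isomorphism, and we never have to verify the group axioms for $*$ directly.

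The verification of the displayed identity becomes transparent in the null coordinates $u = x+y$, $v = x-y$. Substituting into the two formulas of Theorem~\ref{main1} shows that a causal automorphism whose associated pair lies in $H^+$ sends the point with null coordinates $(u,v)$ to $(\varphi(u),\psi(v))$, whereas one whose pair lies in $H^-$ sends it to $(\varphi(v),\psi(u))$. Conceptually, an $H^+$-automorphism preserves the two null foliations and acts on each by a single homeomorphism, while an $H^-$-automorphism interchanges them. This is exactly the dichotomy recorded by $\pi$, and the interchange is the source of the $\mathbb{Z}_2$-action appearing in the definition of $*$.

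I would then finish by composing these null-coordinate maps, splitting into four cases according to $\pi(\Phi(F_1))$ and $\pi(\Phi(F_2))$. Writing $(\alpha,\beta) = \Phi(F_1)$ and $(\varphi,\psi) = \Phi(F_2)$ and recalling that in $F_1 \circ F_2$ the map $F_1$ is applied second, one finds: if $(\alpha,\beta) \in H^+$ the outer map induces no interchange and the pair of $F_1 \circ F_2$ is $(\alpha \circ \varphi,\, \beta \circ \psi) = (\alpha,\beta) \circ (\varphi,\psi)$; if $(\alpha,\beta) \in H^-$ the outer map swaps the two components it receives, and the pair of $F_1 \circ F_2$ is $(\alpha \circ \psi,\, \beta \circ \varphi) = (\alpha,\beta) \circ (\psi,\varphi)$. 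In both cases this equals $(\alpha,\beta) \circ \bigl(\pi(\alpha,\beta) \cdot (\varphi,\psi)\bigr) = (\alpha,\beta) * (\varphi,\psi)$, and a short parity check confirms the composite lands in the correct one of $H^+$, $H^-$. The main obstacle is precisely this book-keeping: one must verify that the interchange is governed by $\pi$ of the \emph{left} factor $(\alpha,\beta)$, matching the order of composition in $G$, and not by the right factor or by the parity of the product. This placement is the entire point of the theorem, since it is exactly why the untwisted operation $\circ$ fails to turn $\Phi$ into an isomorphism.
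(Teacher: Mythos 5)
Your proof is correct, and it differs from the paper's in a way worth noting. The paper verifies the group axioms for $(H,*)$ \emph{directly}: associativity via two four-row case tables comparing $(a,b)*\{(c,d)*(e,f)\}$ with $\{(a,b)*(c,d)\}*(e,f)$, then explicit checks that $(\mathrm{id},\mathrm{id})$ is the identity and that $\pi(\varphi,\psi)\cdot(\varphi^{-1},\psi^{-1})$ is the inverse; only afterwards does it define $\Pi$ and check the homomorphism property in four cases. You instead establish the single identity $\Phi(F_1\circ F_2)=\Phi(F_1)*\Phi(F_2)$ first and obtain the group axioms for $*$ by transport of structure along the bijection $\Phi$, which is logically sound and eliminates more than half of the paper's computations (it also yields the closure of $H$ under $*$ for free, a point the paper leaves implicit). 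Your verification of the key identity is essentially the paper's cases i)--iv), but the null-coordinate reformulation --- an $H^+$ map acts as $(u,v)\mapsto(\varphi(u),\psi(v))$ and an $H^-$ map as $(u,v)\mapsto(\varphi(v),\psi(u))$ --- collapses the four cases into a clean dichotomy on the parity of the outer factor and makes it conceptually transparent why the twist in $*$ is governed by $\pi$ of the left factor. What the paper's longer route buys is an independent, self-contained confirmation that $(H,*)$ is a group (a useful sanity check on the definition of $*$); what your route buys is brevity and an explanation, rather than a verification, of where the $\mathbb{Z}_2$-twist comes from. I verified your four composites against the paper's: they agree, including the monotonicity bookkeeping that places each composite in the correct component $H^{\pm}$.
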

\begin{proof}

To show associativity, by calculation, we have, $(a,b) * \{(c,d) *
(e,f)\} = (a,b) * \{ (c,d) \circ \pi(c,d) \cdot (e,f) \} = (a,b)
\circ \pi(a,b) \cdot \{(c,d) \circ \pi(c,d) \cdot (e,f)\}$. Thus,
we
have the following four cases.\\

\begin{tabular}{|c|c|c|} \hline
$\pi(a,b)$ & $\pi(c,d)$ & $(a,b)*\{(c,d)*(e,f)\}$ \\ \hline 0 & 0
& $(a,b) \circ \{(c,d)\circ(e,f)\} = (ace,bdf)$ \\ \hline 0 & 1 &
$(a,b) \circ \{(c,d)\circ(f,e)\} = (acf, bde)$ \\ \hline 1 & 0 &
$(a,b) \circ (df, ce) = (adf, bce)$ \\ \hline 1 & 1 & $(a,b) \circ
(de, cf) = (ade, bcf)$ \\ \hline
\end{tabular}
\\
 On the other hand, we have
\begin{align*}
\{(a,b)*(c,d)\} *(e,f) &= \{(a,b) \circ \pi(a,b) \cdot (c,d)\} *
(e,f)  \\
&= \{(a,b) \circ \pi(a,b) \cdot (c,d)\} \circ \pi((a,b) \circ
\pi(a,b) \cdot (c,d)) \cdot (e,f)  \\
&= \{(a,b) \circ \pi(a,b) \cdot (c,d) \} \circ \{\pi(a,b)
\pi(\pi(a,b) \cdot (c,d))\} \cdot (e,f)
 \intertext{($\because$ $\pi$ is a group homomorphism.)}\\
 &= \{(a,b)
\circ \pi(a,b) \cdot (c,d)\} \circ \{
\pi(a,b)\pi(c,d)\} \cdot (e,f) \\
& \intertext{($\because \pi(\pi(a,b) \cdot (c,d)) = \pi(c,d)$)}
\end{align*}

Thus, we have the following four cases. \\

\begin{tabular}{|c|c|c|} \hline
$\pi(a,b)$ & $\pi(c,d)$ & $\{(a,b)*(c,d)\}*(e,f)$ \\ \hline
 0 & 0& $\{(a,b) \circ (c,d)\} \circ (e,f) = (ace,bdf)$ \\ \hline
  0 & 1 & $\{(a,b) \circ (c,d)\} \circ (f,e)=(acf,bde)$ \\ \hline
   1 & 0 & $\{(a,b) \circ (d,c)\} \circ (f,e) = (adf, bce)$ \\ \hline
   1 & 1 & $\{(a,b) \circ (d,c)\} \circ (e,f) = (ade, bcf)$ \\ \hline
\end{tabular}

The above two tables tell us that the operation $*$ satisfies the
associative law.

The following two formulae show that $(id, id)$ is the identity
element in $H$ under $*$.

\begin{align*}
(id, id) * (\varphi, \psi) &= (id, id) \circ \pi(id,id) \cdot
(\varphi, \psi) \\
&= (id, id) \circ (\varphi, \psi) = (\varphi, \psi),\\
(\varphi, \psi)*(id, id) &= (\varphi, \psi) \circ \pi(\varphi,
\psi) \cdot (id,id) \\
&= (\varphi, \psi) \circ (id,id) = (\varphi, \psi).
\end{align*}

The following two formulae show that, for given $(\varphi, \psi)
\in H$, $\pi(\varphi, \psi) \cdot (\varphi^{-1}, \psi^{-1})$ is
the inverse element of $(\varphi, \psi)$.

\begin{align*}
(\varphi, \psi)*\{\pi(\varphi, \psi) \cdot (\varphi^{-1},
\psi^{-1})\} &= (\varphi, \psi) \circ \pi(\varphi, \psi) \cdot \{
\pi(\varphi, \psi) \cdot (\varphi^{-1}, \psi^{-1}) \} \\
&= (\varphi, \psi) \circ \{ \pi(\varphi, \psi) \pi(\varphi, \psi)
\} \cdot (\varphi^{-1}, \psi^{-1})\\
 &= (\varphi, \psi) \circ (\varphi^{-1}, \psi^{-1}) = (id, id),\\
\{\pi(\varphi, \psi) \cdot (\varphi^{-1}, \psi^{-1}) \} *
(\varphi, \psi) &= \{ \pi(\varphi, \psi) \cdot (\varphi^{-1},
\psi^{-1})\} \circ \{\pi(\pi(\varphi, \psi) \cdot (\varphi^{-1},
\psi^{-1}))\} \cdot (\varphi, \psi) \\
&= \{\pi(\varphi, \psi) \cdot (\varphi^{-1}, \psi^{-1}) \} \circ
\{\pi(\varphi, \psi) \cdot (\varphi, \psi) \} = (id,id).
\end{align*}

Therefore, $H$ forms a group under the operation $*$.

We now show that the groups $G$ and $H$ under $*$ are isomorphic. For
given causal automorphism $F(x,y) = (F_1(x,y), F_2(x,y))$, Theorem
~\ref{main1} tells us that $F_1+F_2$ and $F_1-F_2$ determine
unique element in $H$. Therefore, we can define $\Pi : G
\rightarrow H$ by
$$ \Pi(F) = (\mbox{homeomorphism determined by} F_1+F_2, \mbox{homeomorphism determined
by}F_1-F_2) $$ By Theorem ~\ref{main1}, $\Pi$ is a bijection and
we only need to show that $\Pi$ is a homomorphism. We have the
following four cases.\\

i) Let
$G=\frac{1}{2}(\alpha(x+y)+\beta(x-y),\alpha(x+y)-\beta(x-y))$ and
$F=\frac{1}{2}(\varphi(x+y)+\psi(x-y), \varphi(x+y)-\psi(x-y))$
where both $(\alpha, \beta)$ and $(\varphi, \psi)$ are in $H^+$.
Then, we have $G \circ F(x,y) = \frac{1}{2}(\alpha \circ
\varphi(x+y)+\beta \circ \psi(x-y), \alpha \circ
\varphi(x+y)-\beta \circ \psi(x-y))$ and thus $\Pi(G \circ F) =
(\alpha \circ \varphi, \beta \circ \psi)$. Since $\Pi(G)=(\alpha,
\beta)$ and $\Pi(F) = (\varphi, \psi)$, we have $\Pi(G \circ
F)=\Pi(G) * \Pi(F)$.\\

ii) Let
$G=\frac{1}{2}(\alpha(x+y)+\beta(x-y),\alpha(x+y)-\beta(x-y))$ and
$F=\frac{1}{2}(\varphi(x-y)+\psi(x+y), \varphi(x-y)-\psi(x+y))$
where $(\alpha, \beta)$ is in $H^+$ and $(\varphi, \psi)$ is in
$H^-$. Then, we have $G \circ F(x,y) = \frac{1}{2}(\alpha \circ
\varphi (x-y)+\beta \circ \psi(x+y), \alpha \circ
\varphi(x-y)-\beta \circ \psi(x+y))$ and thus $\Pi(G \circ
F)=(\alpha \circ \varphi, \beta \circ \psi$). Since
$\Pi(G)=(\alpha, \beta)$ and $\Pi(F)=(\varphi,\psi)$, we have
$\Pi(G \circ F) = \Pi(G) * \Pi(F)$.\\

iii) Let $G=\frac{1}{2}(\alpha(x-y)+\beta(x+y),
\alpha(x-y)-\beta(x+y))$ and
$F=\frac{1}{2}(\varphi(x+y)+\psi(x-y), \varphi(x+y)-\psi(x-y))$
where $(\alpha, \beta)$ is in $H^-$ and $(\varphi, \psi)$ is in
$H^+$. Then, we have $G \circ F(x,y) = \frac{1}{2}(\alpha \circ
\psi(x-y)+\beta \circ \varphi(x+y), \alpha \circ
\psi(x-y)-\beta\circ\varphi(x+y))$ and thus $\Pi(G \circ
F)=(\alpha \circ \psi, \beta \circ \varphi)$. Since $\Pi(G) =
(\alpha, \beta)$ and $\Pi(F)=(\varphi, \psi)$, we have $\Pi(G
\circ F)= (\alpha\circ\psi,\beta\circ\varphi) = (\alpha,\beta)
\circ \pi(\alpha,\beta)\cdot(\varphi,\psi)=\Pi(G)*\Pi(F)$.\\

iv) Let $G=\frac{1}{2}(\alpha(x-y)+\beta(x+y),
\alpha(x-y)-\beta(x+y))$ and
$F=\frac{1}{2}(\varphi(x-y)+\psi(x+y), \varphi(x-y)-\psi(x+y))$
where both $(\alpha, \beta)$ and $(\varphi,\psi)$ are in $H^-$.
Then, we have $G \circ F(x,y) = \frac{1}{2}(\alpha \circ
\psi(x+y)+\beta\circ\varphi(x-y), \alpha \circ \psi(x+y)-\beta
\circ \varphi(x-y))$ and thus $\Pi(G \circ F)=(\alpha \circ \psi,
\beta \circ \varphi)$. Since $\Pi(G)=(\alpha, \beta)$ and
$\Pi(F)=(\varphi,\psi)$, we have $\Pi(G \circ F) =
(\alpha\circ\psi, \beta\circ\varphi)=(\alpha,\beta) \circ
\pi(\alpha,\beta) \cdot (\varphi, \psi) = \Pi(G) * \Pi(F)$.

This show that $\Pi$ is an isomorphism and the proof is completed.

\end{proof}

In Ref. \cite{CQG2}, it is shown that $H(\mathbb{R})$ is a
subgroup of $G$ and this can also be seen in the above theorem as
follows. If we define a map $\Omega : H(\mathbb{R}) \rightarrow H$
by $\Omega(f)=(f,f)$, then it is easy to see that $\Omega$ is an
injective homomorphism and thus, $H(\mathbb{R})$ is a subgroup of
$G$ through an injective homomorphism $\Pi^{-1} \circ \Omega$.
Zeeman's result tells us that the group of causal automorphisms on
$\mathbb{R}^n_1$ is finite dimensional when $n \geq 3$ and our
result tells us that the group is infinite dimensional when $n=2$.

\section{acknowledgement}

This work was supported by a grant from the College of Applied
Science, Kyung Hee University research professor fellowship.


\begin{thebibliography}{999}

%
\bibitem{Zeeman} E.C. Zeeman, {\it  Causality implie the Lorentz group },
J. Math. Phys. {\bf 5}, (1964) pp.490.

%
\bibitem{CQG4} D.-H. Kim, {\it Causal automorphisms of two-dimensional Minkowski
space-time}, Class. Quantum. Grav. To appear



%
\bibitem{CQG2} D.-H. Kim,
{\it An imbedding of Lorentzian manifolds}, Class. Quantum. Grav.
{\bf 26}, (2009) pp. 075004.













\end{thebibliography}
\end{document}